\begin{document}
\title{A Privacy-Preserving Logistics Information System with Traceability}
%
%\titlerunning{Abbreviated paper title}
% If the paper title is too long for the running head, you can set
% an abbreviated paper title here
%
\author{Quanru Chen\inst{1} \and Jinguang Han\inst{2(}\Envelope\inst{)} \and Jiguo Li\inst{3,4}, Liquan Chen\inst{5} \and Song Li \inst{1}}
\authorrunning{Chen {\em et al.}}
% First names are abbreviated in the running head.
% If there are more than two authors, 'et al.' is used.
%
\institute{College of Information Engineering, Nanjing University of Finance and Economics, Nanjing,  China \\
	\email{cqrqx@qq.com, lisong@nufe.edu.cn}
	\and
	Jiangsu Provincial Key Laboratory of E-Business, Nanjing University of Finance and Economics, Nanjing, China \\
	\email{jghan22@gmail.com}
	\and 
	College of Computer and Cyber Security, Fujian Normal University, Fuzhou, China 
	\and  Fujian Provincial Key Laboratory of Network Security and Cryptology, Fuzhou,  China\\
	\email{ljg1688@163.com}
	\and 
	School of Cyber Science and Engineering, Southeast University, Nanjing, China\\
	\email{Lqchen@seu.edu.cn}
	}
\maketitle              % typeset the header of the contribution
\begin{abstract}
Logistics Information System (LIS) is an interactive system that provides information for logistics managers to monitor and track logistics business. In recent years, with the rise of online shopping, LIS is becoming increasingly important. However, since the lack of effective protection of personal information, privacy protection issue has become the most problem concerned by users. Some data breach events in LIS released users' personal information, including address, phone number, transaction details, etc. In this paper, to protect users' privacy in LIS, a privacy-preserving LIS with traceability (PPLIST) is proposed by combining multi-signature with pseudonym. In our PPLIST scheme, to protect privacy, each user can generate and use different pseudonyms in different logistics services. The processing of one logistics is recorded and unforgeable. Additionally, if the logistics information is abnormal, a trace party can de-anonymize users, and find their real identities. Therefore, our PPLIST efficiently balances the relationship between privacy and traceability.

\keywords{Privacy Protection \and Multi-signature \and Pseudonym \and Traceability \and Logistics Information System.}
\end{abstract}
\section{Introduction}
In recent years, with the rapid development of e-commerce, online shopping has become a popular trend. Online shopping is an interactive activity between a buyer and a seller, where after completing an order by a buyer, the product is delivered via a logistics system \cite{8622160}. Logistics system helps to reduce product cost and save shopping time.

Unfortunately, the current LISs \cite{lis} cannot effectively protect users' privacy information. Users' personal information is clearly visible on the express bill and the LIS database \cite{2014QW}. Some data breaches in LISs released users' personal information, including addresses, phone numbers, transaction details, etc. If a user's personal information is leaked and maliciously collected, she may be at high risk of identity forgery and property fraud, in addition to the risk of being harassed by spam messages. Therefore, it is interesting and important to consider the privacy issues in LISs. 

Furthermore, since a product is delivered by multiple logistics stations, it is important to record the whole logistics process and make the process unforgeable. Additionally, to prevent users from conducting illegal transactions, users can be de-anonymized \cite{5207644,7524578} and traced.

In this paper, we propose a privacy-preserving logistics information system with traceability (PPLIST). Compared with the existing LISs, our scheme has the following advantages:
\begin{enumerate}
	\item [1)] Users can anonymously use the logistics services in our PPLIST scheme. Users generate and use different pseudonyms in different logistics services. 
	Even the internal staff of a logistics company can not directly obtain the information of users' identities, our PPLIST effectively protects users' personal information.
	\item [2)] In the case that the identity of a user needs to be released, a trace party can de-anonymize a user and find his identity. This properties prevent users from conducting illegal logistics via a logistics system. 
	\item [3)] Our PPLIST scheme is efficient. Multi-signature is applied to record the delivery process and reduces the storage space. 
\end{enumerate}
\paragraph{Contributions:} Our main contributions in this paper are summarised as follows: 1) The definition and security model of our PPLIST scheme are formalised; 2) A PPLIST scheme is formally constructed; 3) The security of our PPLIST scheme is formally reduced to well-known complexity assumptions; 4) Our PPLIST scheme is implemented and evaluated.
\subsection{Related Work}
In this subsection, we introduce the work which is related to our PPLST scheme, including LIS, privacy protection in LIS, multi-signature and pseudonym.
\subsubsection{Logistics Information System} LIS is a subsystem and the nerve center of logistics systems. As the control center of the whole logistics activities, LIS has many functions. The main functions of LIS are as follows: collect, store, transmit, process, maintain and output logistics information; provide strategic decision support for logistics managers; improve the efficiency of logistics operations \cite{7034998}.

Bardi et al. \cite{bardi1994logistics} pointed out that the choice of LIS directly affected the logistics cost and customer-service level. Lai et al. \cite{lai2005information} showed that LISs is very important for a company to manage product inventory and predict the trend of customers' online shopping. In addition, Ngai et al. \cite{ngai2008logistics} claimed that LIS is an information system that can promote a good communication between the companies and the customers. An LIS adoption model was proposed in \cite{ngai2008logistics} to examine the relationship among organizational environment, perceived benefits and perceived barriers of LIS adoption. In \cite{closs2000logistics}, Closs and Xu argued that the important source of enterprise competitive advantages was logistics information technology. Their research showed that companies with advanced logistics information technology and LIS performed better than other companies.

LISs have been proposed and applied into various application scenarios \cite{Amazon,Taobao}. Amazon \cite{Amazon} is one of the first companies to provide e-commerce services. Amazon has a logistics system, which realizes the organization and operation of the whole logistics activities. Amazon has also added special technology, One-Click \cite{christin2010dissecting}, in their LIS, which can automatically store the information of customers. Therefore, customers do not input their person information in each shopping. In addition, Amazon's LIS has the following functions \cite{7583659}: order confirmation in time, smooth logistics process, accurate inventory information and optional logistics methods, etc. Amazon has become a business to consumer (B2C) e-commerce \cite{kim2005a} company.

Taobao \cite{Taobao} is a consumer to consumer (C2C) e-commerce \cite{5592943} platform. Taobao entrusts all logistics activities to a third party logistics company, but takes a series of measures to ensure the security of logistics activities. For instance, Taobao implements the network real-name system (NRS) \cite{5496530} in their LIS, and has set up a special customer-service department to solve products logistics problems. Besides, Taobao has the functions of timely confirmation of orders and delivery within the specified time.
 
\subsubsection{Privacy Protection in LIS} Although the LIS of e-commerce platform brings convenience to people's life, it also brings great challenges to privacy protection. LIS stores a large number of users' personal information. Once the information is leaked, it will result in serious threaten to the life and safety of users. Some privacy protection methods in LIS have been proposed, such as \cite{2018LIP,011coordinating,2016AN,2016A,tarjan2014a,2014E,xu2011a}. We compare our scheme with these systems in Table 1. 

Léauté et al. \cite{011coordinating} proposed a scheme to ensure the privacy of users while minimizing the cost of logistics operation. The scheme formalizes the problem as a Distributed Constraint Optimization Problem (DCOP) \cite{grinshpoun2013asymmetric}, and combines various techniques of cryptography. But the disadvantage of this scheme \cite{011coordinating} is that the anonymization of users is not considered. In \cite{xu2011a}, Frank et al. proposed a set of protocols for tracking logistics information, which is a light-weight privacy protection mechanism.

To solve the problem of privacy leakage caused by stolen express order number, Wei et al. \cite{2014E} proposed a k-anonymous model to protect logistics information. However, the method only protects a part of users' personal information, because the names and telephone numbers of receivers are directly printed on the express bills for delivery. 

To improve the security of \cite{2014E}, Qi et al. \cite{2016AN} proposed a new logistics management scheme based on encrypted QR code \cite{tarjan2014a}. After a courier scans the encrypted QR code by using an APP, the logistics information of products in the database is automatically updated through GPRS or Wi-Fi. 
The APP provides an optimal delivery route for couriers. However, the problem of \cite{2016AN} is that users' personal information is still visible to the internal staff of express companies. In addition, Laslo et al. \cite{tarjan2014a} proposed a traceable LIS based on QR code. However, this scheme does not consider privacy protection. 

Furthermore, Gao et al. \cite{2018LIP} proposed a secure LIS, named LIP-PA, which can protect the logistics process information between different logistics stations, but the protection of users' personal information is not considered well. Hence, the privacy of users in LISs \cite{2018LIP,2016AN,tarjan2014a} was not fully considered.

Liu et al. \cite{2016A} designed an LIS based on the Near Field Communication (NFC) \cite{2016Research} technology. In \cite{2016A}, users’ personal information was hidden in tags, and only authorized people can access information. However, because of the limitation of computation power, the scheme cannot perform complex encryption and decryption processes.

In summary, above schemes addressed the privacy issues in LIS, but these schemes did not consider the track of delievery process and the trace of illegal users. However, these are important issues in LISs.
Therefore, to solve these problems, we propose a new privacy-preserving LIS called PPLIST.

\begin{table}[!t]\centering 
	\caption{The Comparison between Our Scheme and Related Schemes}\label{tab1}\centering 
	\begin{tabular}{|c|c|c|c|}
		\hline
		Systems & Anonymity & Traceability & Security Proof\\
		\hline
		Gao et al.\cite{2018LIP} & $\times$ & $\times$ & \checkmark\\
		Léauté et al.\cite{011coordinating} & $\times$ & $\times$ & \checkmark\\
		Qi et al.\cite{2016AN}& $\times$ & \checkmark & $\times$ \\
		Liu et al.\cite{2016A}& \checkmark & $\times$ & \checkmark\\
		Laslo et al.\cite{tarjan2014a}& $\times$ & \checkmark & $\times$\\
		Wei et al.\cite{2014E}& \checkmark & $\times$  & $\times$\\
		Frank et al.\cite{xu2011a} & $\times$ & \checkmark & $\times$\\
		Our PPLIST & \checkmark & \checkmark & \checkmark\\
		\hline	
	\end{tabular}
\end{table}

\subsubsection{Multi-Signature} Multi-signature, also called multi-digital signature, is an important branch of digital signature. Multi-signature is suitable to the case where multiple users sign on a message, and a verifier is convinced that each user participated in the signing \cite{inproceedingsBA}.

Itakura \cite{1983A} first proposed the concept of multi-signature, and proposed a multi-signature scheme with fixed number of signatures. Then, many multi-signature schemes were proposed \cite{2006Multi,article27,article33,article36,inproceedings35,article34}. The multi-signature generation time of schemes \cite{1983A,article34} is linear with the number of signers. Okamoto et al. \cite{inproceedings35} proposed a muti-signature scheme, but it, like scheme \cite{article36}, only allows each signer in a group to sign the message. It's inflexible. Furthermore, Ohta and Okamoto \cite{article36} formlized the security model of multi-signature. However, this scheme did not consider the security of the key generation process, so its security is not strong. Based on \cite{article36}, Micali et al. \cite{article33} proposed a formal and strong security model for multi-signature. Bellare and Neven \cite{2006Multi} proposed a new scheme and proved its secure in the plain public-key model. This scheme improved the efficiency of previous multi-signature schemes. 

Since it enables multiple signers to collabratively sign on a message, multi-signature has been used into various application scenarios, such as \cite{5714258,2013Efficient,8187818,2018Compact}. Shacham \cite{2003Sequential} proposed a sequential aggregate multi-signature scheme. The scheme computed the final multi-signature by sequentially aggregating the signatures from multiple signers. However, the data transmission of \cite{2003Sequential} is large. To solve this problem, Neven \cite{5714258} presented a new sequential aggregate multi-signature scheme based on \cite{2003Sequential}. The scheme of \cite{5714258} reduces signing and verification costs effectively. 

Tiwari et al. \cite{2013Efficient} proposed a secure multi-proxy multi-signature scheme. It does not need paring operations, and reduces the running time. The scheme is aslo secure against the attack of selected messages. However, Asaar et al.\cite{8187818} found the scheme in \cite{2013Efficient} is insecure, and proposed an identity-based multi-proxy and multi-signature scheme without pairing. The security of this scheme was reduced to the RSA assumption in the random oracle model by using the Forking Lemma technique \cite{2008MultisignaturesS}. 

Recently, Dan et al.\cite{2018Compact} proposed a new multi-signature scheme. Signature compression and public-key aggregation were used in the scheme. Therefore, when a group of signers signed a message, the verifier only needs to verify the final aggregate signature. The advantage of this scheme is that the size of final aggregate signature is constant and independent of the number of signers. Furthermore, this scheme is secure against rogue-key attacks. When constructing our PPLIST, we apply the scheme \cite{2018Compact} to record the whole logistics process and reduces the storage cost.

\subsubsection{Pseudonym} Pseudonym is a method that allows users to interact anonymously with other organizations. Because pseudonym is unlinkable, it can effectively protect the information of a user's identity \cite{8965262} among multiple authentications. The common pseudonym generation techniques are as follows \cite{MC2019}: 1) Encryption with public key; 2) Hash function; 3) Keyed-hash function with stored key; 4) Tokenization. 

Chaum \cite{1985DC} found that pseudonym enables users to work anonymously with multiple organizations, and users can use different pseudonyms in different organizations. Because of the unlinkability of pseudonym, no organization can link a user's pseudonyms to her identities. Later, Chaum and Evertse \cite{inproceedingsC} presented a pseudonym model scheme based on RSA. However, the scheme needs a trusted center to complete the sign and transfer of all users' credentials.

To reduce the trust on the trusted center, Chen \cite{inproceedingsCL} proposed a scheme based on the discrete logarithm assumption. The scheme also needs a trusted center, but the trusted center is only required for pseudonym verification. Although Chen's scheme is less dependent on the trusted center than the scheme \cite{inproceedingsC}, the trusted center was still required.

In order to enable users to have the initiative in the pseudonym system, Lysyanskaya et al. \cite{articleLA} proposed a new scheme. In this scheme, a user's  master secret key was introduced. If the master secret keys are different, the information of users' identities must be different. In addition, the pseudonym certificate submitted by a user to an organization only corresponds to the user's master public key and does not disclose the information of his master secret key.

Pseudonym has been applied in some schemes \cite{2020Anonymous,8103347} to protect users' privacy. To reduce the communication cost of traditional pseudonym systems in Internet of Vehicles, Kang et al. \cite{8103347} proposed a privacy-preserved pseudonym scheme. In this scheme, the network edge resources were used for effective management, and the communication cost was effectively reduced.

In \cite{2020Anonymous}, Han et al. proposed an anonymous single sign-on (ASSO) scheme. In this scheme, pseudonym was applied to protect users' identities. A user uses his secret key to generate different pseudonyms, and obtains a ticket from a ticket issuer anonymously without releasing anything about his real identity. Furthermore, a user can use different pseudonyms to buy different tickets and the ticket issuer cannot know whether two tickets are for the same user or two different users. In our PPLIST scheme, to protect users' privacy, we apply the pseudonym developed in \cite{2020Anonymous} to enable users to use logistics services anonymously and unlinkably.
\subsection{Paper Organisation}
The remainder of this paper is organised as follows. Section 2 presents the preliminaries used in our scheme, and describes the formal definition and security model of our PPLIST scheme. Section 3 provides the construction of our scheme. The security proof and implementation of our scheme are presented in Section 4 and Section 5, respectively. Finally, Section 6 concludes this paper.
\section{Preliminaries}
In this section, the preliminaries used throughout this paper are introduced, including bilinear group, complexity assumptions, formal definition and security model. Table 2 summaries the notations used in this paper

\begin{table}\centering
	\caption{Notation Summary}\label{tab2}\centering
	\begin{tabular}{c|l||c|l}
		\hline
		Notation & Explanation &Notation & Explanation\\
		\hline
		$1^{l}$ & A security parameter & \emph{Pseudonym} & The pseudonym of $U$ \\
		$S_{i}$ & The \emph{i}-th logistics station  & $PUB$ & Public parameters\\
		$U$ & User & PPT & Probable polynomial-time  \\
		$T$ & The trace party & $\mathscr{B}(1^{l})$ & A bilinear group generator\\
		\emph{YA} & The aggregation of $AgY$ & $x\stackrel{R}{\leftarrow}X$ & \emph{x} is randomly selected from  $X$  \\
		\emph{AgY} & A set of selected public keys & $H_{1}, H_{2},H_{3}$ & Cryptographic hash functions \\
		$\sigma$ & The aggregation of signatures & $Sig_{i}$ & The \emph{i}-th single signature  \\
		$\pi$ & The proof of user's ownership & \emph{I} & A set consisting of the indexes  \\
		\emph{d} & The number of elements in  $I$ & & of selected logistics stations\\
		\emph{q} & A prime number\\
		\hline
	\end{tabular}
\end{table}

The framework of our PPLIST is presented in Fig.~\ref{fig1}. The system first generates the public parameters $PUB$. Then, each entity (e.g. logistics station, user and the trace party) generates its secret-public key pair. Prior to ordering a service, the user generates a pseudonym by using his secret key. The system determines the delivery path, and then generates the aggregated public key of the selected logistics stations. After that, each selected logistics station $S_{i}$ generates its single signature $Sig_{i}$ on the product information, pseudonym and aggregated public key, and then passes it to the next selected logistics station. Finally, the last selected logistic station generates its signature and the aggregate signature $\sigma$.  To obtain a product, the user needs to prove that he is the owner by generating a proof of the knowledge included in the pseudonym. The user can verify whether the product is delivered correctly by checking the aggregate signature $\sigma$. In the case that the identity of a user needs to be traced, the trace party can use his secret key to de-anonymous the pseudonym, and find the user's identity. 
\begin{figure}
	\includegraphics[width=\textwidth]{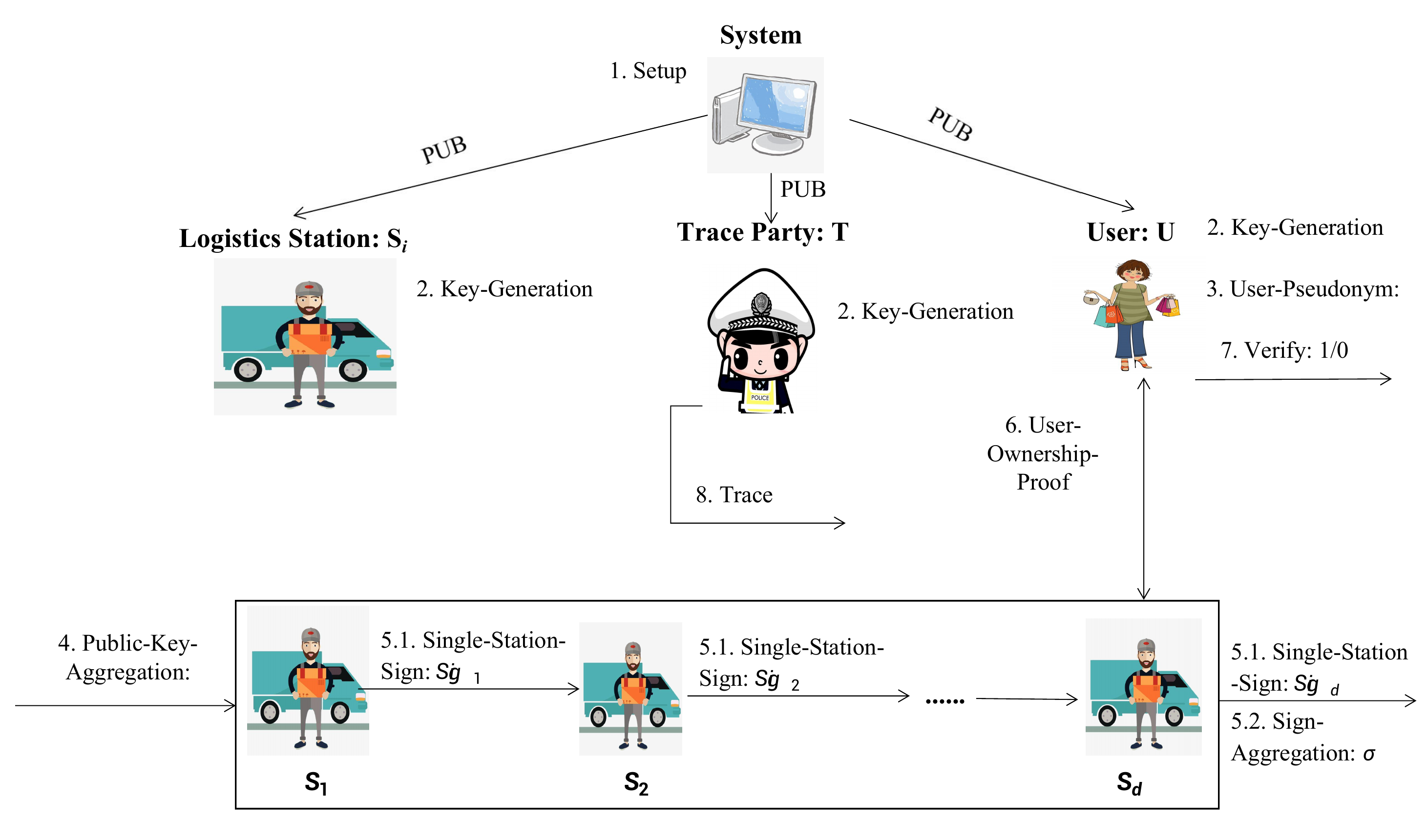}
	\caption{The Framework of Our PPLIST Scheme} \label{fig1}
\end{figure}

\subsection{Bilinear Group}
Let $G_{1},G_{2},G_{\iota}$ be cyclic groups with prime order $q$. A map $e : G_{1}\times G_{2}\rightarrow G_{\iota}$ is a bilinear map if it satisfies the following properties: \textbf{(1) Bilinearity:} For all $g\in G_{1}$, $h\in G_{2}$, $a,b\in Z_{q}$, $e(g^{a},h^{b})=e(g^{b},h^{a})=e(g,h)^{ab}$; \textbf{(2) Non-degeneration:} For all $g\in G_{1}$, $h\in G_{2}$, $e(g,h)\not=1_{\iota}$, where $1_{\iota}$ is the identity element in $G_{\iota}$; \textbf{(3) Computability:} For all $g\in G_{1}$, $h\in G_{2}$, there exists an efficient algorithm to compute $e(g,h)$.

Let $\mathscr{B}(1^{l})\rightarrow(e,q,G_{1},G_{2},G_{\iota})$ be a bilinear group generator which takes as input a security parameter $1^{l}$ and outputs a bilinear group $(e,q,G_{1},G_{2},G_{\iota})$.

A function $\epsilon(x)$ is negligible if for any $k\in N$, there exist a $z\in N$ such that $\epsilon(x)<\frac{1}{x^{z}}$ when $x>z$.
\subsection{Complexity Assumptions}
\begin{definition}[Computational Diffie-Hellman (CDH) Assumption \cite{2008Efficient}] Let $\mathscr{B}(1^{l}) \rightarrow(e,q,G_{1},G_{2},G_{\iota})$, and $g_{1},g_{2}$ be generator of $G_{1},G_{2}$, respectively. Suppose that $\alpha,\beta \stackrel{R}{\leftarrow} Z_{q}$. Given a triple $(g^{\alpha}_{1},g^{\beta}_{1},g^{\beta}_{2})$, we say that the $CDH$ assumption holds on $(e,q,G_{1},G_{2},G_{\iota})$ if all $PPT$ adversaries $\mathcal{A}$ can output $g^{\alpha \beta}_{1}$ with a negligible advantage, namely $Adv_{\mathcal{A}}^{CDH}=PR[\mathcal{A}(g^{\alpha}_{1},g^{\beta}_{1},g^{\beta}_{2})\rightarrow g^{\alpha \beta}_{1}]\leq \epsilon(l)$.
\end{definition}
\begin{definition}[Discrete Logarithm (DL) Assumption \cite{Gordon1993Discrete}] Let $G$ be a cyclic group with prime order $q$, and $g$ be a generator of $G$. Given $Y \in G$, we say that the $DL$ assumption holds on $G$ if all $PPT$ adversaries can output a number $x\in Z_{q}$ such that $Y=g^{x}$ with a negligible advantage, namely $Adv_{\mathcal{A}}^{DL}=PR[Y=g^{x}|\mathcal{A}(q,g,G,Y)\rightarrow x]\leq \epsilon(l)$.
\end{definition}
\subsection{Formal Definition}
A PPLIST scheme is formalized by the following eight algorithms:

\paragraph{$Setup(1^{l})\rightarrow PUB.$} The algorithm takes the security parameters $1^{l}$ as input and outputs the public parameters $PUB$.
\paragraph{$Key-Generation.$} This algorithm consists of the following sub-algorithms:
	\begin{itemize}
		\item [1)] 
		$Station-Key-Generation(1^{l})\rightarrow (SK_{S_{i}},PK_{S_{i}}).$ This algorithm is executed by each logistics station $S_{i}$. $S_{i}$ takes the security parameters $1^{l}$ as input, and outputs his secret-public key pair $(SK_{S_{i}},PK_{S_{i}})$, where $i=1,2,3,\cdots,n$.
		\item [2)] 
		$User-Key-Generation(1^{l})\rightarrow (SK_{U},PK_{U}).$ This algorithm is executed by a user $U$. $U$ takes the security parameters $1^{l}$ as input, and outputs his secret-public key pair $(SK_{U},PK_{U})$.
		\item [3)]
		$Trace-Key-Generation(1^{l})\rightarrow (SK_{T},PK_{T}).$ This algorithm is executed by a trace party $T$. $T$ takes the security parameters $1^{l}$ as input, and outputs his secret-public key pair $(SK_{T},PK_{T})$.
	\end{itemize}
\paragraph{$User-Pseudonym(PUB,SK_{U},PK_{T})\rightarrow Pseudonym.$} This algorithm is executed by $U$. $U$ takes as input his secret key $SK_{U}$, the public key $PK_{T}$ of the trace party and the public parameters $PUB$, and outputs a pseudoym $Pseudonym$.
\paragraph{$Public-Key-Aggregation(PUB,PK_{S_{k_{1}}},PK_{S_{k_{2}}}\cdots,PK_{S_{k_{d}}})\rightarrow YA.$} Let $I$ be a set which consists of the indexes of some selected logistics stations. This algorithm takes as input the public parameters $PUB$ and the public keys $PK_{S_{k_{1}}},\\PK_{S_{k_{2}}},$ $\cdots,PK_{S_{k_{d}}}$ of selected logistics stations, and outputs the aggregated public key $YA$.
\paragraph{$Sign.$} This algorithm consists of the following sub-algorithms:
	\begin{itemize}
		\item [1)] $Single-Station-Sign(PUB,SK_{S_{k_{i}}},Pseudonym,YA,m)\rightarrow Sig_{i}.$ This algorithm is executed by each selected logistics station $S_{k_{i}}$. $S_{k_{i}}$ takes as input its secret key $SK_{S_{k_{i}}}$, the aggregated public key $YA$, product information $m$ and the public parameters $PUB$, and outputs a signature $Sig_{i}$, where $i=1,2,3,\cdots,d$.
		\item [2)]
		$Sign-Aggregation(PUB,Sig_{1},Sig_{2},\cdots,Sig_{d})\rightarrow \sigma.$ This algorithm takes as input the public parameters $PUB$ and signatures $Sig_{i}$, and outputs a final signature $\sigma$.
	\end{itemize}
\paragraph{$User-Ownership-Verify(PUB,SK_{U},PK_{T},Pseudonym)\leftrightarrow S_{i}(PUB)\rightarrow (\pi,1/0).$} This algorithm is executed between $S_{i}$ and $U$. 
	\begin{itemize}
		\item [1)]
		$U$ takes as input his secret key $SK_{U}$, the $T's$ public key $PK_{T}$, his pseudonym $Pseudonym$ and the public parameters $PUB$, and outputs a proof $\pi$. 
		\item [2)]
		The verifier takes as input the public parameters $PUB$, and outputs $1$ if the proof $\pi$ is valid; otherwise, it outputs $0$ to show the proof is invalid.
	\end{itemize}
\paragraph{$Verify(PUB,\sigma,Pseudonym,YA,m)\rightarrow 1/0.$} This algorithm takes as input the public parameters $PUB$, the final signature $\sigma$, the pseudonym $Pseudonym$, the aggregated public key $YA$ and product information $m$, and outputs $1$ if signature is valid; otherwise, it outputs $0$ to show it is an invalid signature.
\paragraph{$Trace(PUB,\sigma,SK_{T},Pseudonym,YA,m)\rightarrow PK_{U}/\bot.$} This algorithm is executed by $T$. $T$ takes as input his secret key $SK_{T}$, the pseudonym $Pseudonym$, the aggregated public key $YA$, the final signature $\sigma$, product information $m$ and the public parameters $PUB$, and outputs $U's$ public key $PK_{U}$ if the signature $\sigma$ is valid; otherwise, it outputs $\bot$ to show failure.
\subsection{Security Requirements}
The security model of our scheme is defined by the following two games.
\subsubsection{Unforgeability.} This is used to define the unforgeability of signature, namely even if users, the trace party and the other stations collude, they cannot forge a valid signature on behalf of the selected logistics stations. This game is executed between a challenger $\mathcal{C}$ and a forger $\mathcal{F}$.
\paragraph{Setup.} $\mathcal{C}$ runs $Setup(1^{l})\rightarrow PUB$ and sends $PUB$ to $\mathcal{F}$.
\paragraph{Key-Generation Query.} 
\begin{enumerate}
	\item [1)] $\mathcal{F}$ asks the public key of stations. $\mathcal{C}$ runs $Station-Key-Generation(1^{l})\rightarrow (SK_{S_{i}},PK_{S_{i}})$ and sends the station's public key $PK_{S_{i}}$ to $\mathcal{F}$.
	\item [2)] When $\mathcal{F}$ asks a urse's secret-public key pair, $\mathcal{C}$ runs $User-Key-Generation\\(1^{l})\rightarrow (SK_{U},PK_{U})$ and sends $(SK_{U},PK_{U})$ to $\mathcal{F}$. Let $GPU$ be a set of users' public key.
	\item [3)] When $\mathcal{F}$ asks the secret-public key of the trace party, $\mathcal{C}$ runs $Trace-Key-Generation(1^{l})\rightarrow (SK_{T},PK_{T})$ and sends $(SK_{T},PK_{T})$ to $\mathcal{F}$.
\end{enumerate}
\paragraph{User-Pseudonym Query.} $\mathcal{F}$ submits a $SK_{U}$ and the public key $PK_{T}$ of the trace party, $\mathcal{C}$ runs $User-Pseudonym(PUB,SK_{U},PK_{T})\rightarrow Pseudonym$ and sends $Pseudonym$ to $\mathcal{F}$. Let $UPQ$ be a set of pseudonyms of users.
\paragraph{Public-Key-Aggregation Query.} Let $I$ be a set which consists of the indexes of some selected logistics stations and let $d$ be the number of elements in the set $I$. $\mathcal{F}$ submits a group of selected stations' public keys. $\mathcal{C}$ runs
$Public-Key-Aggregation(PUB,PK_{S_{k_{i}}})\rightarrow YA$, where $i=1, 2,\cdots,d$. $\mathcal{C}$ returns $YA$ to $\mathcal{F}$.
\paragraph{Sign Query.} $\mathcal{F}$ adaptively submits selected station's secret key $SK_{S_{k_{i}}}$, the aggregation of public key $YA$, and $U's$ pseudonym $Pseudonym$ and the product information $m$ to ask for a single signature $Sig_{i}$ up to $\varrho$ times.
\paragraph{Output.}  $\mathcal{F}$ outputs a forged signature $Sig_{i}^{'}$, a final signature $\sigma^{'}$, $U's$ pseudonym $Pseudonym$ and the product information $m'$, the public keys of selected logistics stations $AgY$ and the aggregated public keys $YA^{'}$. $\mathcal{F}$ wins the game if $PK_{S_{i}}\in AgY$, $\mathcal{F}$ has not conducted signature query on the message $m'$, and $Verify(PUB,\sigma^{'},Pseudonym,YA^{'},m')=1$.
\begin{definition}
	A privacy-preserving logistics information system with traceability  is $(\varrho,\epsilon(l))$ unforgeable if all probabilistic
	polynomial-time (PPT) forger $\mathcal{F}$ who makes $\varrho$ signature queries can only win the above game with a negligible
	advantage, namely
	\begin{equation}
		Adv=Pr\Bigg[Verify(PUB,\sigma^{'},Pseudonym,YA^{'},m')=1\Bigg]\leq\epsilon(l)
	\end{equation}
\end{definition}
\subsubsection{Traceability.} This is used to formalise the traceability of our scheme, namely an attacker $\mathcal{A}$ cannot  frame a user who did not use the logistics services. We suppose that at least one station is honest. This game is executed between a challenger $\mathcal{C}$ and an attacker $\mathcal{A}$.
\paragraph{Setup.} $\mathcal{C}$ runs $Setup(1^{l})\rightarrow PUB$ and sends $PUB$ to $\mathcal{A}$.
\paragraph{Key-Generation Query.} 
\begin{enumerate}
	\item [1)] $\mathcal{A}$ can ask for the public key of each station. $\mathcal{C}$ runs $Station-Key-Generation(1^{l})\rightarrow (SK_{S_{i}},PK_{S_{i}})$ and sends the station's public key $PK_{S_{i}}$ to $\mathcal{F}$.
	\item [2)] When $\mathcal{A}$ asks a urse's secret-public key pair, $\mathcal{C}$ runs $User-Key-Generation\\(1^{l})\rightarrow (SK_{U},PK_{U})$. Let the secret-public key pair of $U^{*}$ be $(SK_{U^{*}},PK_{U^{*}})$. $\mathcal{C}$ sends other users' secret-public key pair $(SK_{U},PK_{U})$ and $PK_{U^{*}}$ to $\mathcal{A}$. Let $GPU$ be a set consisting of  users's public keys.
	\item [3)] When $\mathcal{A}$ asks the secret-public key pair of the trace party, $\mathcal{C}$ runs $Trace-Key-Generation(1^{l})\rightarrow (SK_{T},PK_{T})$ and sends $(SK_{T},PK_{T})$ to $\mathcal{A}$.
\end{enumerate}
\paragraph{User-Pseudonym Query.} $\mathcal{A}$ submits a user's $SK_{U}$ and the public key $PK_{T}$ of the trace party, $\mathcal{C}$ runs $User-Pseudonym(PUB,SK_{U},PK_{T})\rightarrow Pseudonym$ and sends $Pseudonym$ to $\mathcal{A}$. Let $UPQ$ be a set of pseudonyms of users.
\paragraph{Public-Key-Aggregation Query.} Let $I$ be a set which consists of the indexes of some selected logistics stations and let $d$ be the number of elements in the set $I$. $\mathcal{A}$ submits a group of selected stations' public keys. $\mathcal{C}$ runs
$Public-Key-Aggregation(PUB,PK_{S_{k_{i}}})\rightarrow YA$, where $i=i,2,\cdots,d$. $\mathcal{C}$ returns $YA$ to $\mathcal{A}$.
\paragraph{Sign Query.} $\mathcal{A}$ adaptively submits a selected station's secret key $SK_{S_{k_{i}}}$, the aggregation of public key $YA$, and $U's$ pseudonym $Pseudonym$ and the product information $m$ to ask for a single signature $Sig_{i}$ up to $\varrho$ times.
\paragraph{Output.}  $\mathcal{A}$ outputs a tuple $(\sigma',Pseudonym^{'},YA',m')$. $\mathcal{A}$ wins the game if $Trace(PUB,\sigma^{'},SK_{T},Pseudonym^{'},YA',m')\rightarrow PK_{U^{*}}^{'}$ with $PK_{U^{*}}^{'}\notin GPU$ or $PK_{U^{*}}^{'}\not=PK_{U^{*}}\in GPU$.
\begin{definition}
	A privacy-preserving logistics information system with traceability is $(\varrho,\epsilon(l))$ traceable if all probabilistic polynomial-time (PPT) adversary $\mathcal{A}$ who makes $\varrho$ signature queries can only win the above game with a negligible
	advantage, namely
	\begin{equation}
		\begin{array}{c|c}
			Adv=Pr\Bigg[
			\begin{array}{c}
				PK_{U^{*}}^{'}\notin GPU \ or\\ PK_{U^{*}}^{'}\not=PK_{U^{*}}\in GPU
			\end{array} &
			\begin{array}{c}
				Trusted-Party-Trace\\(PUB,\sigma^{'},SK_{T},Pseudonym^{'},\\YA',m')\rightarrow PK_{U^{*}}^{'}
			\end{array}
			\Bigg]\leq\epsilon(l)
		\end{array}
	\end{equation}
\end{definition}
\section{Construction of Our Scheme}
In this section, we introduce the construction of our scheme. We firstly present a high-level overview, and then describe the formal construction of our scheme.
\subsection{High-Level Overview}
The high-level overview of our scheme is as follows.
\paragraph{Setup.} The system generates the corresponding public
parameters $PUB$.
\paragraph{Key-Generation.} Suppose that there are $n$ logistics stations. Each $S_{i}$, $U$ and $T$ generate their secret-public key pairs $(x_{s_{i}},Y_{s_{i}})$, $(x_{u},Y_{u})$ and $(x_{t},Y_{t})$, where $i=1,2,3,\cdots,n$. 
\paragraph{User-Pseudonym.} In order to protect privacy in a delivery process, $U$ generates a pseudonym $Pseudonym$  by using his secret key $SK_{U}$ and $T's$ public key $PK_{T}$.    
\paragraph{Public-Key-Aggregation.} According to product information, the system determines the logistics process by selecting a set of logistics stations $S_{k_{1}},S_{k_{2}},\cdots,S_{k_{d}}$. Let $AgY=\big\{Y_{S_{k_{1}}},Y_{S_{k_{2}}},\cdots,Y_{S_{k_{d}}} \big\}$ be a set consisting of the public keys of the selected logistics stations. For each service, a table $Table$ is built to record its delivery information. The system uses the public key of each $S_{k_{i}}$ and the set $AgY$ to generate $h_{i}$ to resist the rogue key attacks, where $i=1,2,3,\cdots,d$. Then, the system generates the aggregated public key $YA$.
\paragraph{Sign.} Each selected logistics station $S_{k_{i}}$ uses his secret key $x_{s_{k_{i}}}$ to generate a signature $Sig_{i}$ on $U's$ pseudonym $Pseudonym$ and the product information $m$, and sends $Sig_{i}$ to the next logistics stations. Finally, the last logistic station $S_{k_{d}}$ use his secret key $x_{k_{d}}$ to generate a single signature $Sig_{d}$ on  $U's$ pseudonym $Pseudonym$ and the producte information $m$, and compute the aggregated signature $\sigma=\prod_{i=1}^{d}Sig_{i}$. $S_{k_{d}}$ also adds $\sigma$ to the table $Table$.
\paragraph{User-Ownership-Verify.} When $U$ proves to the last logistic station $S_{k_{d}}$ that he is the owner of the product, he proves that his secret key $x_{u}$ is included in the pseudonym $Pseudonym$ by executing a zero-knowledge proof with $S_{k_{d}}$. If the proof is correct, $U$ is the owner of the product; otherwise, he is not the owner of the product.
\paragraph{Verify.} When $U$ receives a product, he checks whether the product was delivered correctly by checking the validity of the aggregate signature $\sigma$. If it is, the product is delivered correctly; otherwise, there are some problems in the delivery process.
\paragraph{Trace.} Given $(\sigma,Pseudonym,AgY,m)$, in the case that a user needs to be de-anonymized, the trace party $T$ first checks whether the signature is correct or not. If it is incorrect, $T$ aborts; otherwise, $T$ users his secret key $x_{t}$ to de-anonymize the Pseudonym and get $U$'s public key $Y_{u}$. 
\subsection{Formal Construction}
The formal construction of our PPLIST scheme is formalised by the following eight algorithms:
\paragraph{Setup.} The system runs $\mathscr{B}(1^{l})\rightarrow(e,q,G_{1},G_{2},G_{\iota},g_{1},g_{2})$ with $e : G_{1}\times G_{2}\rightarrow G_{\iota}$. Let $g_{1}$ be a generator of $G_{1}$ and $g_{2}$ be a generator of $G_{2}$. Suppose that $H_{1}:\big\{0,1\big\}^{*}\rightarrow G_{1},
H_{2}:\big\{0,1\big\}^{*}\rightarrow Z_{q}$ and $H_{3}:\big\{0,1\big\}^{*}\rightarrow Z_{q}$ are cryptographic hash functions. The public parameters are $PUB=(e,q,G_{1},G_{2},G_{\iota},g_{1},g_{2},H_{1},H_{2},H_{3})$.
\paragraph{Key-Generation.}
\begin{itemize}
	\item [1)] $Station-Key-Generation.$ Each logistics station $S_{i}$ selects $x_{s_{i}}\stackrel{R}{\leftarrow}Z_{q}$ and computes $Y_{s_{i}}=g_{2}^{x_{s_{i}}}$. The secret-public key pair of $S_{i}$ is $(x_{s_{i}},Y_{s_{i}})$, where $i=1,2,3,\cdots,n$.
	\item [2)] $User-Key-Generation.$ Each $U$ selects $x_{u}\stackrel{R}{\leftarrow}Z_{q}$ and computes $Y_{u}=g_{2}^{x_{u}}$. The secret-public key pair of $U$ is $(x_{u},Y_{u})$.
	\item [3)] $Trace-Key-Generation.$ $T$ selects $x_{t}\stackrel{R}{\leftarrow}Z_{q}$ and computes $Y_{t}=g_{2}^{x_{t}}$. The secret-public key pair of $T$ is $(x_{t},Y_{t})$.
\end{itemize}
\paragraph{User-Pseudonym.} To generate a pseudonym for a product information $m$, $U$ firstly computes $k=H_{3}(x_{u}\parallel m)$ and then computes $C_{1}=g_{2}^{k}, C_{2}=Y_{t}^{k}\cdot g_{2}^{x_{u}}$. The pseudonym is $Pseudonym=(C_{1},C_{2})$.
\paragraph{Public-Key-Aggregation.} Let $AgY=\big\{Y_{S_{k_{1}}},Y_{S_{k_{2}}},\cdots,Y_{S_{k_{d}}} \big\}$ be a set consisting of the public keys of the logistics stations which will deliver the product to the user. The system firstly computes $h_{i}=H_{2}(Y_{S_{k_{i}}}\parallel AgY)$, and then computes $YA=\prod_{i=1}^{d}{Y_{S_{k_{i}}}}^{h_{i}}$. Let $(Pseudonym,m,AgY,YA)$ be a record of the product information $m$. The system adds it into the table $Table$.
\paragraph{Sign.} When receiving a product, each $S_{k_{i}}$ computes $Sig_{i}=H_{1}(C_{1}\parallel C_{2}\parallel m)^{h_{i}\cdot x_{S_{k_{i}}}}$. $S_{k_{i}}$ sends $Sig_{i}$ to $S_{k_{i+1}}$ for $i=1,2,3,\cdots,d-2$. Finally, $S_{k_{d}}$ computes $Sig_{d}$ and $\sigma = \prod_{i=1}^{d}Sig_{i}$. Subsequently, $S_{k_{d}}$ adds it into the record of $m$ in the table $Table$.
\paragraph{User-Ownership-Verify.} To prove the ownership of the product to the last logistics station $S_{k_{d}}$. $U$ and $S_{k_{d}}$ work as follows.
\begin{itemize}
	\item [1)] $U$ selects $v_{1}\stackrel{R}{\leftarrow}Z_{q},v_{2}\stackrel{R}{\leftarrow}Z_{q}$ and computes $V_{1}=g_{2}^{v_{1}},V_{2}=Y_{t}^{v_{1}}\cdot g_{2}^{v{2}}$. 
	\item [2)] $U$ sends $C_{1},C_{2},V_{1},V_{2}$ to $S_{k_{d}}$. $S_{k_{d}}$ selects $c\stackrel{R}{\leftarrow}Z_{q}$, and returns it to $U$. 
	\item [3)] $U$ computes $r_{1}=v_{1}-c\cdot k$, and $r_{2}=v_{2}-c\cdot x_{u}$, and returns $(r_{1},r_{2})$ to $S_{k_{d}}$. 
	\item [4)] $S_{k_{d}}$ verifies $V_{1} \stackrel{?}{=}g_{2}^{r_{1}}\cdot C_{1}^{c}$, and $V_{2} \stackrel{?}{=}Y_{t}^{r_{1}}\cdot g_{2}^{r_{2}}\cdot C_{2}^{c}$. If these equations hold, it outputs $1$ to show that $U$ is the owner of the product; otherwise, it outputs $0$ to show that $U$ is not the owner of the product. 
\end{itemize}
\paragraph{Verify.} $U$ verifies $e(\sigma,g_{2}^{-1})\cdot e(H_{1}(C_{1}\parallel C_{2}\parallel m),YA) \stackrel{?}{=}1_{G_{\iota}}$. If the equation holds, it outputs $1$ to show that the delivery process is correct; otherwise, it outputs $0$ to show that there are some errors in the delivery.
\paragraph{Trace.} In the case that the identity of $U$ who selected the product $m$ needs to be revealed, $T$ searches in the table $Table$, and finds the record $(\sigma,Pseudonym,YA,$ $m)$ firstly. Then, $T$ verifies $e(\sigma,g_{2}^{-1})\cdot e(H_{1}(C_{1}\parallel C_{2}\parallel m),YA) \stackrel{?}{=}1_{G_{\iota}}$. If it is not, $T$ quits the system immediately; otherwise, $T$ computes $Y_{u}=C_{2}/C_{1}^{x_{t}}$, and confirms the identity of user.
\section{Security Analysis}
In this section, the security of our scheme is formally proven.
\begin{theorem}\label{theorem1}
	Our privacy-preserving logistics information system with traceability (PPLIST) is $(\varrho,\epsilon(l))-$ unforgeable if and only if the $(\epsilon(l)^{'},T)-$ computational Diffie-Hellman (CDH) assumption holds on the bilinear group $(e,q,G_{1},G_{2},G_{\iota})$ and $H_{1},H_{2}$ are two random oracles and $H_{3}$ is a cryptographic hash function, where $\varrho$ is the number of signature queries made by the forger $\mathcal{F}$, and $\epsilon(l)^{'}\geq \frac{1}{q}\cdot\frac{1}{q-1}\cdot \frac{1}{\varrho}\cdot \epsilon(l)$.	
\end{theorem}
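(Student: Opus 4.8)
The plan is to prove the substantive direction — that the $(\epsilon(l)',T)$-CDH assumption implies $(\varrho,\epsilon(l))$-unforgeability — by a reduction in the random-oracle model, and to regard the converse (``unforgeable only if CDH holds'') as the routine direction obtained by contraposition from the same embedding. For the main reduction I would build a simulator $\mathcal{C}$ that, given a CDH instance $(g_1^{\alpha},g_1^{\beta},g_2^{\beta})$, aims to output $g_1^{\alpha\beta}$ by running the forger $\mathcal{F}$. First I run $Setup$ honestly and designate one target station $S^{*}$ whose public key is set to $Y_{S^{*}}=g_2^{\beta}$, so that its secret key is implicitly the unknown $\beta$; every other station key, every user key, and the trace key are generated with their secret keys known to $\mathcal{C}$, so that all the $(SK,PK)$ pairs that $\mathcal{F}$ is entitled to obtain in the Key-Generation, User-Pseudonym and Public-Key-Aggregation queries are answered exactly as in the real game.

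Next I would program the random oracles. For $H_2$ I return fresh random exponents in $Z_q$, stored for consistency, so the aggregation coefficients $h_i=H_2(Y_{S_{k_i}}\parallel AgY)$ are simulated faithfully. For $H_1$, on a fresh input $w=C_1\parallel C_2\parallel m$ I draw $r_w\stackrel{R}{\leftarrow}Z_q$ and return $g_1^{r_w}$, except on one guessed ``target'' query $w^{*}$ where I return $g_1^{\alpha}$; guessing the index of $w^{*}$ among the at most $\varrho$ relevant queries is the source of the $\frac{1}{\varrho}$ factor. Signature queries against $S^{*}$ are then answerable without knowing $\beta$, since $Sig=H_1(w)^{h\,\beta}=(g_1^{r_w})^{h\,\beta}=(g_1^{\beta})^{h\,r_w}$, computable from the challenge element $g_1^{\beta}$ and the known $r_w,h$. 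The simulation aborts only if $\mathcal{F}$ asks for a signature on $w^{*}$; by the winning condition $\mathcal{F}$ never signs its forgery message, so a correct guess of $w^{*}$ makes this event vanish.

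When $\mathcal{F}$ outputs a forgery $(\sigma',Pseudonym,YA',m')$ with $PK_{S^{*}}\in AgY$, I would invoke the verification equation $e(\sigma',g_2^{-1})\cdot e(H_1(C_1\parallel C_2\parallel m'),YA')=1_{G_{\iota}}$, which forces $\sigma'=\prod_i H_1(w^{*})^{h_i x_{S_{k_i}}}$ over the stations in $AgY'$. Because every station other than $S^{*}$ was generated by $\mathcal{C}$ with a known secret key, I can strip off their contributions $H_1(w^{*})^{h_i x_{S_{k_i}}}$ (each a known power of the known element $g_1^{\alpha}$), isolating the target term $H_1(w^{*})^{h^{*}\beta}=g_1^{\alpha\beta h^{*}}$ with $h^{*}=H_2(Y_{S^{*}}\parallel AgY')$. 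Raising to $1/h^{*}$ recovers $g_1^{\alpha\beta}$, which requires $h^{*}\in Z_q^{*}$; the residual probabilities that the programmed oracle values and this invertibility behave as needed supply the remaining $\frac{1}{q}\cdot\frac{1}{q-1}$ factors, yielding $\epsilon(l)'\geq\frac{1}{q}\cdot\frac{1}{q-1}\cdot\frac{1}{\varrho}\,\epsilon(l)$.

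The step I expect to be the main obstacle is isolating the target station's contribution in the presence of the other keys in $AgY'$, i.e.\ ruling out a rogue-key attack in which $\mathcal{F}$ tries to cancel $S^{*}$'s term using maliciously correlated keys. Here it is essential that the model lets $\mathcal{C}$ generate every station's key pair, so that all non-target secret keys are known and their contributions are cleanly removable, and that the coefficients $h_i$ are bound through $H_2$ to both the individual key and the whole set $AgY'$, so that reusing or reordering keys cannot force $h^{*}=0$ except with the negligible probability already charged. I would close by assembling the abort analysis — correct guess of $w^{*}$, no signing query on $w^{*}$, and invertible $h^{*}$ — to confirm that $\mathcal{C}$ solves CDH with at least the claimed advantage whenever $\mathcal{F}$ forges with advantage $\epsilon(l)$.
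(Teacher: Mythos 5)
Your reduction skeleton (embedding $g_2^{\beta}$ as the target station's key, programming $H_1$ to return $g_1^{\alpha}$ on one guessed query and $g_1^{r_w}$ elsewhere, answering the target station's signing queries as $(g_1^{\beta})^{h\,r_w}$, and aborting on a signing query for the target input) matches the paper's simulation step for step. The gap is in your final extraction step. You propose to strip off the contributions of the non-target stations in $AgY'$ using their known secret keys, and you state that ``it is essential that the model lets $\mathcal{C}$ generate every station's key pair.'' But the model does not guarantee this: the winning condition only requires $PK_{S_{i}}\in AgY$ for the honest target key, the theorem explicitly claims security ``even if \ldots the other stations collude,'' and the Sign Query even lets $\mathcal{F}$ submit station secret keys of its own. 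This is the plain public-key setting, so $\mathcal{F}$ may populate $AgY'$ with rogue keys it fabricated itself (possibly correlated with $Y_{S^{*}}$, e.g.\ of the form $g_2^{\gamma}\cdot Y_{S^{*}}^{-1}$), for which nobody --- in particular not the simulator --- knows the discrete logarithms. Then $\sigma'=H_1(w^{*})^{\sum_i h_i x_i}$ contains unknown exponents that cannot be ``cleanly removed,'' and your isolation of $g_1^{\alpha\beta h^{*}}$ breaks down. The binding of $h_i$ to $(Y_i, AgY')$ via $H_2$ does make such cancellation computationally hard, but that hardness is the thing to be proven; it does not hand the simulator the rogue secrets.

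The paper closes exactly this hole with the Forking Lemma, and this is why that lemma appears in its proof and in the advantage bound. The reduction runs $\mathcal{F}$ twice with identical randomness, reprogramming only the $H_2$ answer at the target station's query ($h_j$ in the first run, $h_j'\neq h_j$ in the second, all other $h_i$ unchanged). The two forgeries then satisfy $\sigma/\sigma'=H_1(w^{*})^{(h_j-h_j')\beta}=g_1^{\alpha\beta(h_j-h_j')}$, because the contributions of all other keys in $AgY'$ --- known or rogue --- are \emph{identical} in both runs and cancel in the ratio; the simulator never needs their secret keys. Computing $(\sigma/\sigma')^{1/(h_j-h_j')}$ yields $g_1^{\alpha\beta}$. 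So your approach is not a harmless simplification of the paper's: without either the rewinding argument or an added model restriction (e.g.\ a knowledge-of-secret-key/registered-key assumption on all keys in $AgY'$, which would change the theorem being proven), the reduction does not go through against the adversaries the game permits.
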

%\begin{proof} 
	%%The proof of this theorem is referred to the Appendix \ref{proof-1}.
%\end{proof}
\begin{proof}
	Suppose that there exists a forger $\mathcal{F}$ that can break the unforgeability of our scheme, we can construct an algorithm $\mathcal{B}$ which can use $\mathcal{F}$ to break the CDH assumption. Given $(A,B_{1},B_{2})=(g_{1}^{\alpha},g_{1}^{\beta},g_{2}^{\beta})$, the aim of $\mathcal{B}$ is to output $g_{1}^{\alpha\beta}$.
\end{proof}
\paragraph{Setup.} $\mathcal{B}$ selects $H_{3}:\{0,1\}^{*}\rightarrow \mathbb{Z}_{p}$.
 The public parameters are $PUB=(e,q,G_{1},\\G_{2},G_{\iota},g_{1},g_{2},H_{3})$. 
\paragraph{$\bullet$} $\mathcal{B}$ responds to the queries of $\mathcal{F}$ about the random oracle $H_{1}$. 
 \begin{itemize}
 	\item[1)] $\mathcal{F}$ queries the hash function $H_{1}$ of a pseudonym $(C_{1}^{k},C_{2}^{k})$ and a message $m_{k}$. $\mathcal{B}$ selects $t_{k}\stackrel{R}{\leftarrow}Z_{q}$,and sets $H_{1}(C_{1}^{k}\parallel C_{2}^{k}\parallel m^{K})=g_{1}^{t_{k}}$,where $k=1,2,\cdots n$. $\mathcal{B}$ sends $g_{1}^{t_{i}}$ to $\mathcal{F}$ and adds $(C_{1}^{k},C_{2}^{k},m_{k},g^{t_{k}})$ into the table $Table_{1}$.
 	\item [2)] $\mathcal{F}$ queries the hash function $H_{1}$ of a pseudonym $(C_{1}^{*},C_{2}^{*})$ and a message $m^{*}$. $\mathcal{B}$ sends $g_{1}^{\alpha}$ to $\mathcal{F}$, and adds $(C_{1}^{*},C_{2}^{*},m^{*},g_{1}^{\alpha})$ into the table $Table_{1}$.
 \end{itemize}
\paragraph{$\bullet$}  $\mathcal{B}$ responds to the queries of $\mathcal{F}$ about the random oracle $H_{2}$. Let $j$ is the index of $Y_{s_{j}}\in AgY$.
\begin{itemize}
	\item [1)] when $i\not=j$, $\mathcal{B}$ selects $h_{i}\stackrel{R}{\leftarrow}\mathbb{Z}_{q}$ and sets $h_{i}=H_{2}(Y_{s_{i}}\parallel AgY)$. $\mathcal{B}$ returns $h_{i}$ to $\mathcal{F}$ and adds $(Y_{s_{i}},AgY,h_{i})$ into the table $Table_{2}$. %When $\mathcal{F}$ asks again, $\mathcal{B}$ returns $h_{i}^{'}=h_{i}$ to $\mathcal{F}$ after searching the table $Table_{2}$.
	\item [2)] when $i=j$, $\mathcal{B}$ selects $h_{j}\stackrel{R}{\leftarrow}Z_{q}$ and sets $h_{j}=H_{2}(Y_{s_{j}}\parallel AgY)$. $\mathcal{B}$
	 returns $h_{j}$ to $\mathcal{F}$, and adds $(Y_{s_{j}}, AgY, h_{j})$ into the table $Table_{2}$. %When $\mathcal{F}$ asks again, $\mathcal{B}$ returns $h_{j}^{'}\not=h_{j}$ to $\mathcal{F}$ after searching the table $Table_{2}$.
\end{itemize}
\paragraph{Key-Generation\ Query.}
\begin{itemize}
	\item [1)] $Station-Key-Generation\ Query.$
	$\mathcal{B}$ picks a station $S_{j}$ from $S_{1},S_{2},\cdots,S_{n}$. For the $i$-th logistics station key generation query, $\mathcal{B}$ selects $x_{s_{i}}$, and computes $Y_{S_{i}}=g_{2}^{x_{s_{i}}}$ where $i\not=j$. $\mathcal{B}$ returns $Y_{S_{i}}$ to $\mathcal{F}$. For the $j$-th logistics station key generation query, $\mathcal{B}$ returns $B_{2}$ to $\mathcal{F}$.
	\item [2)] $User-Key-Generation\ Query.$
	$\mathcal{B}$ selects $x_{u}\stackrel{R}{\leftarrow}Z_{q}$, and compute $Y_{u}=g_{2}^{x_{u}}$. $\mathcal{B}$ sends the secret-public key pair $(x_{u},Y_{u})$ to $\mathcal{F}$.
	\item [3)] $Trace-Key-Generation\ Query.$
	$\mathcal{B}$ selects $x_{t}\stackrel{R}{\leftarrow}Z_{q}$, and compute $Y_{t}=g_{2}^{x_{t}}$. $\mathcal{B}$ sends the secret-public key pair $(x_{t},Y_{t})$ to $\mathcal{F}$.
\end{itemize}
\paragraph{User-Pseudonym\ Query.}
$\mathcal{F}$ submits a product information $m$. $\mathcal{B}$ computes $k=H_{3}(x_{u}\parallel m)$ firstly, then computes $C_{1}=g_{2}^{k}$, $C_{2}=Y_{t}^{k}\cdot g_{2}^{x_{u}}$. $\mathcal{C}$ sends $(C_{1},C_{2})$ to $\mathcal{F}$.
\paragraph{Public-Key-Aggreation\ Query.}
$\mathcal{F}$ submits a group of logistics stations' public-key and sets $AgY=\big\{ Y_{s_{1}}, Y_{s_{2}}, \cdots, Y_{s_{d}} \big\}$. $\mathcal{B}$ searches in $Table_{2}$, and gets  $h_{i}=H_{2}(Y_{s_{i}}\parallel AgY)$, where $i=1,2,3\cdots d$. $\mathcal{B}$ computes $YA=\prod_{i=1}^{d}Y_{s_{i}}^{h_{i}}$, and sends $YA$ to $\mathcal{F}$. 
\paragraph{Sign\ Query.} $\mathcal{B}$ responds to the queries of $\mathcal{F}$ about the single signature $Sig_{i}$:
\begin{itemize}
	\item [1)] Since $i\not=j$, $\mathcal{F}$ asks about the single signature of the logistics station $S_{i}$ on a pseudonym $(C_{1},C_{2})\not=(C_{1}^{*},C_{2}^{*})$ and the message $m^{*}$. $\mathcal{B}$ computes $Sig_{i}=g_{1}^{\alpha\cdot h_{i}\cdot x_{s_{i}}}$. $\mathcal{B}$ sends $Sig_{i}$ to $\mathcal{F}$.
	\item [2)] Since $i\not=j$, $\mathcal{F}$ asks about the single signature of station $S_{j}$ on a pseudonym $(C_{1},C_{2})$ and a message $m\not=m^{*}$. $\mathcal{B}$ computes $Sig_{i}=g_{1}^{t_{{i}}\cdot h_{i}\cdot \beta}$. $\mathcal{B}$ sends $Sig_{i}$ to $\mathcal{F}$. $\mathcal{F}$ can ask for many times. 
	\item [3)] $\mathcal{F}$ asks about the single signature of station $S_{j}$ on a pseudonym $(C_{1}^{*},C_{2}^{*})$ and the message $m^{*}$. $\mathcal{B}$ aborts.
\end{itemize}
\paragraph{Output.}
$\mathcal{F}$ outputs a forged final signature $\sigma^{'}$. According to the above situations, $\mathcal{F}$ can make $q$ queries of random oracles and $\varrho$ signature generations, respectively. By using Forking lemma technique, for two queries of the random oracle $H_{2}$ on the $j$-th station, $\mathcal{B}$ selects $h_{j}$ and $h_{j}^{'}$ with $h_{j}\not=h_{j}^{'}$. For other selected stations, $\mathcal{B}$ sets $h_{i}$ and $h_{i}^{'}$ with $h_{i}=h_{i}^{'}$. Hence, $YA/YA^{'}=\prod_{i=1}^{d}{Y_{S_{k_{j}}}}^{h_{j}-h_{j}^{'}}$. If $\mathcal{F}$ can forge a valid signature, $\mathcal{B}$ have $Sig_{j}=g_{1}^{\alpha\cdot \beta\cdot h_{i}}$ and $Sig_{j}^{'}=g_{1}^{\alpha\cdot\beta\cdot h_{i}^{'}}$, respectively. Then, $\mathcal{B}$ computes $\sigma/\sigma^{'}=g_{1}^{\alpha\cdot \beta\cdot (h_{i}-h_{i}^{'})}$. Therefore, $\mathcal{B}$ can compute $g_{1}^{\alpha\cdot\beta}=(\sigma/\sigma^{'})^{1/(h_{i}-h_{i}^{'})}$ and break the CDH assumption.

Since $\mathcal{F}$ needs to make two hash queries to get different values of $h_{i}$ and $h_{i}^{'}$, the advantage is $ (\frac{1}{q}\cdot\frac{1}{q-1})$. Furthermore, $\mathcal{F}$ queries the single signature of station $S_{j}$ on the pseudonym $(C_{1}^{*},C_{2}^{*})$ and the message $m^{*}$ with the advantage $\frac{1}{\varrho}$. Therefore, the advantage with which $\mathcal{B}$ can break the CDH assumption is
\begin{equation}
	Adv_{\mathcal{B}}^{CDH}\geq \frac{1}{q}\cdot\frac{1}{q-1}\cdot \frac{1}{\varrho}\cdot \epsilon(l) 
\end{equation}

\begin{theorem}\label{theorem2}
	Our privacy-preserving logistics information systems with traceability (PPLIST) is $(\varrho,\epsilon(l))-$ traceable if the computational Diffie-Hellman (CDH\\) assumption holds on the bilinear group $(e,q,G_{1},G_{2},G_{\iota})$ with the advantage at most $\epsilon_{1}(l)$, the discrete logarithm (DL) assumption holds on the group $G_{2}$ with the advantage at most $\epsilon_{2}(l)$, and $H_{1},H_{2},H_{3}$ are random oracles, where $\varrho$ is the number of signature queries made by the forger $\mathcal{F}$, and $\epsilon(l)=max\big\{( \frac{1}{2}\cdot\frac{1}{q}\cdot \frac{1}{\varrho}\cdot\epsilon_{1}(l)),(\frac{1}{2}\cdot\epsilon_{2}(l))\big\}$.
\end{theorem}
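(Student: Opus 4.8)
The plan is to show that any PPT attacker $\mathcal{A}$ against traceability wins with probability at most the stated $\max$, by partitioning $\mathcal{A}$'s success event according to the two disjoint winning conditions and constructing a separate reduction for each. The simulator flips a fair coin at the outset to guess which of the two events $\mathcal{A}$ will trigger; the guess is correct with probability $\frac{1}{2}$, which accounts for the leading factor $\frac{1}{2}$ in both terms of the bound. Throughout, the simulator plays the role of the honest station $S_j$ (recall that at least one station is honest) and answers the $H_1,H_2,H_3$, key-generation, pseudonym, aggregation and sign queries exactly as in the proof of Theorem~\ref{theorem1}, programming the random oracles and using the embedded honest-station key to answer sign queries consistently. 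A key bookkeeping fact I would establish first is that every pseudonym the sign oracle ever signs opens, under the trace key $x_t$, to a registered public key: user-pseudonym queries are made with known $SK_U$ whose $Y_u\in GPU$, and $U^{*}$'s pseudonym opens to $PK_{U^{*}}\in GPU$.

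First case: $\mathcal{A}$ outputs $(\sigma',Pseudonym',YA',m')$ for which $Trace$ returns some $PK'\notin GPU$. Since $Trace$ returns a key only when $Verify$ accepts, $\sigma'$ is a valid aggregate signature, and with $S_j\in AgY'$ it contains a valid honest-station contribution on the tuple $(C_1'\parallel C_2'\parallel m')$. Because the traced identity is not that of any registered user, by the bookkeeping fact above this tuple was never signed for $S_j$, so $\sigma'$ is a genuine forgery of the honest station's signature. I would therefore embed the CDH instance $(A,B_1,B_2)=(g_1^{\alpha},g_1^{\beta},g_2^{\beta})$ exactly as in Theorem~\ref{theorem1}, set the honest station's public key to $B_2$, and program $H_1$ on the guessed target tuple to $g_1^{\alpha}$; guessing the correct $H_1$ query (probability $\frac{1}{q}$) and the correct sign query (probability $\frac{1}{\varrho}$) and extracting $g_1^{\alpha\beta}$ from $\sigma'$ yields a CDH solver of advantage at least $\frac{1}{2}\cdot\frac{1}{q}\cdot\frac{1}{\varrho}\cdot\epsilon_1(l)$.

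Second case: $Trace$ returns $PK'\in GPU$ with $PK'\neq PK_{U^{*}}$, i.e.\ a valid tuple meant for the target user $U^{*}$ is opened to a different registered identity. Here I would embed the DL challenge $Y=g_2^{x}$ as $PK_{U^{*}}$, hand $\mathcal{A}$ only $PK_{U^{*}}=Y$ (not $x_{u^{*}}$), and observe that the simulator can still manufacture $U^{*}$'s pseudonyms as $C_1=g_2^{k},\ C_2=Y_t^{k}\cdot PK_{U^{*}}$ for self-chosen $k$ and can simulate $U^{*}$'s ownership proofs with the honest-verifier zero-knowledge simulator of the underlying $\Sigma$-protocol via random-oracle programming. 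A winning tuple of this type must carry an accepting ownership proof that the game attributes to $U^{*}$; by the special soundness of that $\Sigma$-protocol I would rewind $\mathcal{A}$ to obtain two accepting transcripts sharing the commitment $(V_1,V_2)$ with distinct challenges $c\neq c'$, from which the witness $(k,x_{u^{*}})$ is extracted. The extracted $x_{u^{*}}$ is exactly the discrete logarithm $x$ of $Y$, giving a DL solver of advantage at least $\frac{1}{2}\cdot\epsilon_2(l)$. Combining the two cases yields the claimed $\epsilon(l)=\max\{\frac{1}{2}\cdot\frac{1}{q}\cdot\frac{1}{\varrho}\cdot\epsilon_1(l),\ \frac{1}{2}\cdot\epsilon_2(l)\}$.

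The main obstacle I anticipate is making the second case rigorous: the winning condition is phrased purely in terms of $Trace$, so I must first pin down precisely what it means for a tuple to be ``attributed to $U^{*}$'' and then argue convincingly that opening such a tuple to a different registered key is impossible without knowledge of $x_{u^{*}}$ — that is, that the pseudonym together with the accepted ownership proof binds the opened identity to the presenter's secret key, so that the extractor of the $\Sigma$-protocol recovers the DL challenge rather than a key $\mathcal{A}$ already knows. Guaranteeing that the simulator can answer all pseudonym, sign and ownership queries concerning $U^{*}$ without $x_{u^{*}}$ while keeping the simulated environment statistically indistinguishable from the real game is the delicate part; the first case, by contrast, is essentially a repackaging of the unforgeability reduction of Theorem~\ref{theorem1}.
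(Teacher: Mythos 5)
Your proposal takes essentially the same route as the paper's proof: the same two-case decomposition aligned with the two winning conditions, the CDH reduction for the first case recycled from Theorem~\ref{theorem1} (honest station's key set to $B_{2}$, $H_{1}$ programmed to $g_{1}^{\alpha}$ on the guessed target tuple), a rewinding-based extraction from the ownership $\Sigma$-protocol for the second case, and the max of the two advantages, with the factor $\frac{1}{2}$ explained by guessing which case occurs (the paper never justifies its $\frac{1}{2}$; your coin flip is the natural reading). The one substantive divergence is the DL case, and there your version is the sounder one: the paper's Type-2 never embeds a DL instance at all --- its $\mathcal{B}$ generates every user key itself and ends up ``extracting'' $x_{u}^{'}$ for a traced key $Y_{u}^{'}\in GPU$ with $Y_{u}^{'}\neq Y_{u^{*}}$, i.e.\ a discrete logarithm that the reduction (and the adversary) already possesses, so nothing is broken. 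Your explicit embedding of the DL challenge as $PK_{U^{*}}$, with $U^{*}$'s pseudonyms manufactured as $C_{1}=g_{2}^{k}$, $C_{2}=Y_{t}^{k}\cdot PK_{U^{*}}$ and its ownership proofs simulated via HVZK, is exactly what a genuine reduction requires. The obstacle you flag at the end is real, and it is precisely where the paper is internally inconsistent: special soundness hands you the discrete log of whatever key the accepted pseudonym opens to, so if the winning tuple opens to some $PK^{'}\neq PK_{U^{*}}$ whose secret the adversary legitimately holds, extraction yields nothing useful; to recover the challenge $x_{u^{*}}$ one must argue that the tuple in question opens to $PK_{U^{*}}$ itself, which the winning condition as written does not guarantee. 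Neither your sketch nor the paper closes that gap, but you name it explicitly while the paper silently elides it. A final minor point: your Case-1 bound $\frac{1}{2}\cdot\frac{1}{q}\cdot\frac{1}{\varrho}\cdot\epsilon_{1}(l)$ matches the theorem statement, whereas the paper's proof body carries an extra factor $\frac{1}{q-1}$ from forking $H_{2}$ --- an inconsistency internal to the paper, not a defect of your argument.
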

%\begin{proof} 
%	The proof of this theorem is referred to the Appendix \ref{proof-2}.
%\end{proof}
\begin{proof}
	Suppose that there exists an adversary $\mathcal{A}$ that can break the traceability of our scheme, we can construct an algorithm $\mathcal{B}$ which can use $\mathcal{A}$ to break the CDH assumption or DL assumption.
\end{proof}
\paragraph{Setup.} The public parameters are $PUB=(e,q,G_{1},G_{2},G_{\iota},g_{1},g_{2})$.
\paragraph{$\bullet$} $\mathcal{B}$ responds to the queries of $\mathcal{A}$ about the random oracle $H_{1}$.
\begin{itemize}
	\item[1)] $\mathcal{A}$ queries $H_{1}$ on a pseudonym $(C_{1}^{k},C_{2}^{k})$ and a message $m_{k}$. $\mathcal{B}$ selects $t_{k}\stackrel{R}{\leftarrow}Z_{q}$, and sets $g_{1}^{t_{k}}=H_{1}(C_{1}^{k}\parallel C_{2}^{k}\parallel m_{k})$, where $i=1,2,\cdots n$. $\mathcal{B}$ sends $g_{1}^{t_{k}}$ to $\mathcal{A}$, and adds $(C_{1}^{k},C_{2}^{k},m_{k},g_{1}^{t_{k}})$ to the table $Table_{1}$.
	\item [2)]  $\mathcal{A}$ queries $H_{1}$ of a pseudonym $(C_{1}^{*},C_{2}^{*})$ and a message $m^{*}$. $\mathcal{B}$ sets $g_{1}^{\alpha}=H_{1}(C_{1}^{*}\parallel C_{2}^{*}\parallel m_{*})$. $\mathcal{B}$ sends $g_{1}^{\alpha}$ to $\mathcal{A}$, and adds $(C_{1}^{*},C_{2}^{*},m^{*},g_{1}^{\alpha})$ to the table $Table_{1}$.
\end{itemize}
\paragraph{$\bullet$} $\mathcal{B}$ responds to the queries of $\mathcal{A}$ about the random oracle $H_{2}$. Let  $j$ be the index of $Y_{s_{j}}\in AgY$.
\begin{itemize}
	\item [1)] when $i\not=j$, $\mathcal{B}$ selects $h_{i}\stackrel{R}{\leftarrow}\mathbb{Z}_{q}$ and sets $h_{i}=H_{2}(Y_{S_{i}}\parallel AgY)$. $\mathcal{B}$ returns $h_{i}$
	to $\mathcal{A}$, and records it into the table $Table_{2}$. %When $\mathcal{F}$ asks again, $\mathcal{B}$ returns $h_{i}^{'}=h_{i}$ to $\mathcal{F}$ after searching the table $Table_{2}$.
	\item [2)] when $i=j$, $\mathcal{B}$ selects $h_{j}\stackrel{R}{\leftarrow}Z_{q}$ and sets $h_{j}=H_{2}(Y_{S_{j}}\parallel AgY)$. $\mathcal{B}$ sends $h_{j}$ to $\mathcal{A}$ and adds $(Y_{S_{j}}, AgY, h_{j})$ into the table $Table_{2}$. %When $\mathcal{F}$ asks again, $\mathcal{B}$ returns $h_{j}^{'}\not=h_{j}$ to $\mathcal{F}$ after searching the table $Table_{2}$.
\end{itemize}
\paragraph{$\bullet$} $\mathcal{B}$ responds to the queries of $\mathcal{A}$ about the random oracle $H_{3}$. When receiving a query $(x_{u},m)$ on $H_{3}$, $\mathcal{A}$ selects $k\stackrel{R}{\leftarrow}\mathbb{Z}_{p}$ and returns it to $\mathcal{A}$. $\mathcal{B}$ adds $(x_{u},m,k)$ into the table $Table_{3}$.

\paragraph{Key-Generation\ Query.}
\begin{itemize}
	\item [1)] $Station-Key-Generation\ Query.$
	$\mathcal{B}$ picks a logistics station $S_{j}$ from $S_{1},S_{2},\cdots,S_{n}$. For the $i$-th logistics station key generation query, $\mathcal{B}$ selects $x_{s_{i}}$, and computes $Y_{S_{i}}=g_{2}^{x_{s_{i}}}$ where $i\not=j$. $\mathcal{B}$ returns $Y_{S_{i}}$ to $\mathcal{A}$. For the $j$-th logistics station key generation query, $\mathcal{B}$ returns $B_{2}$ to $\mathcal{A}$.
	\item [2)] $User-Key-Generation\ Query.$
	$\mathcal{B}$ selects $x_{u}\stackrel{R}{\leftarrow}Z_{q}$, and compute $Y_{u}=g_{2}^{x_{u}}$. $\mathcal{B}$ retains the private key $x_{u}$, sends the public key $Y_{u}$ to $\mathcal{A}$, and sets $GPU$ to store the public key $Y_{u}$ of all users. $Y_{u^{*}}$ is the public key of user $U^{*}$.
	\item [3)] $Trace-Key-Generation\ Query.$
	$\mathcal{B}$ selects $x_{t}\stackrel{R}{\leftarrow}Z_{q}$, and compute $Y_{t}=g_{2}^{x_{t}}$. $\mathcal{B}$ sends the secret-public key pair $(x_{t},Y_{t})$ to $\mathcal{A}$.
\end{itemize}
\paragraph{User-Pseudonym\ Query.}
$\mathcal{A}$ submits a user's secret key $x_{u}$ and a product information $m$. $\mathcal{B}$ first searches in the table $Table_{3}$ and gets $k=H_{3}(x_{u}\parallel m)$. Then $\mathcal{B}$ computes $C_{1}=g_{2}^{k}$, $C_{2}=Y_{t}^{k}\cdot g_{2}^{x_{u}}$. $\mathcal{C}$ sends $(C_{1},C_{2})$ to $\mathcal{A}$.
\paragraph{Public-Key-Aggreation\ Query.}
$\mathcal{A}$ submits a group of logistics stations' public-keys and sets $AgY=\big\{ Y_{s_{1}}, Y_{s_{2}}, \cdots, Y_{s_{d}} \big\}$ . $\mathcal{B}$ searches on the table $Table_{2}$, and gets $h_{i}=H_{2}(Y_{s_{i}}\parallel AgY)$ where $i=1,2,3,\cdots, d$. $\mathcal{B}$ computes $YA=\prod_{i=1}^{d}Y_{s_{i}}^{h_{i}}$. $\mathcal{B}$ sends $YA$ to $\mathcal{A}$. 
\paragraph{Sign\ Query.} $\mathcal{B}$ responds to the queries of $\mathcal{A}$ about the single signature $Sig_{i}$:
\begin{itemize}
	\item [1)] when $i\not=j$, $\mathcal{A}$ asks about the single signature of station $S_{i}$ on a pseudonym $(C_{1},C_{2})\not=(C_{1}^{*},C_{2}^{*})$ and the message $m^{*}$. $\mathcal{B}$ computes $Sig_{i}=g_{1}^{\alpha\cdot h_{i}\cdot x_{s_{i}}}$. $\mathcal{B}$ sends $Sig_{i}$ to $\mathcal{A}$.
	\item [2)] when $i\not=j$, $\mathcal{A}$ asks about the single signature of station $S_{j}$ on a pseudonym $(C_{1},C_{2})$ and a message $m\not=m^{*}$. $\mathcal{B}$ computes $Sig_{i}=g_{1}^{t_{{i}}\cdot h_{i}\cdot \beta}$. $\mathcal{B}$ sends $Sig_{i}$ to $\mathcal{A}$. 
	\item [3)] $\mathcal{A}$ asks about the single signature of station $S_{j}$ on a pseudonym $(C_{1}^{*},C_{2}^{*})$ and the message $m^{*}$. $\mathcal{B}$ aborts.
\end{itemize}
\paragraph{Output.}
$\mathcal{A}$ outputs a final signature $\sigma^{'}$. We consider the following two types of attackers. Suppose that the secret-public key pair of $U^{'}$ is $(x_{u}^{'},Y_{u}^{'})$, and $\mathcal{A}$ only knows $Y_{u}^{'}$. In Type-1 case,  $\mathcal{A}$ outputs a final signature $\sigma^{'}$ containing a new pseudonym $(C^{'}_{1},C^{'}_{2})$. In Type-2 case, $\mathcal{A}$ also outputs a final signature $\sigma^{'}$ which is a signature on a used pseudonym. %When using the used pseudonym to track, another user $U^{'}$ is tracked. After receiving the response from $\mathcal{B}$, $\mathcal{F}$ can forge a final signature $\sigma^{'}$. We need to consider two types of forged $\sigma^{'}$.

$\emph{Type-1:}$ If there is a new pseudonym $(C^{'}_{1},C^{'}_{2})\notin UPQ$. $\mathcal{C}$ uses $(C^{'}_{1},C^{'}_{2})$ to compute $Y^{'}_{u}=C^{'}_{2}/(C^{'}_{1})^{x_{t}}$, and gets $Y^{'}_{u}$,  $Y^{'}_{u}\notin GPU$. $\mathcal{A}$ forged a single signature $Sig_{i}^{'}$ and a final signature $\sigma^{'}$, where $Sig_{i}^{'}=H_{1}(C^{'}_{1}\parallel C^{'}_{2}\parallel m)^{h_{i}\cdot x_{s_{i}}}$, and $\sigma^{'} = \prod_{i=1}^{d}Sig_{i}^{'}$. Hence, $\mathcal{B}$ can use $\mathcal{A}$ to break CDH assumption, The detailed proof is shown in \emph{Theorem 1}.

$\emph{Type-2:}$ If there is a pseudonym $(C^{'}_{1},C^{'}_{2})$, namely $(C^{'}_{1},C^{'}_{2})\in UPQ$. $\mathcal{B}$ uses $(C^{'}_{1},C^{'}_{2})$ to compute $Y^{'}_{u}=C^{'}_{2}/(C^{'}_{1})^{x_{t}}$, and gets $Y^{'}_{u}$ with $ Y^{'}_{u}\not=Y_{u^{*}}\in GPU$. If $\mathcal{A}$ can prove $C'_{1}=g_{2}^{k'}, C_{2}=Y_{t}^{k'}\cdot g_{2}^{x'_{u}}$. By using the rewinding technique, $\mathcal{B}$ can  extract the knowledge of $(x^{'}_{u},k')$ from $\mathcal{A}$. Therefore,  given $(Y^{'}_{u},g_{2})$, $\mathcal{B}$ can output a $x_{u}^{'}$ such that $Y^{'}_{u}=g_{2}^{x_{u}^{'}}$. Hence, $\mathcal{B}$ can use $\mathcal{A}$ to break the DL assumption.

By the proof of unforgeability, the advantage with which $\mathcal{B}$ can break the CDH assumption is $(\frac{1}{q}\cdot\frac{1}{q-1}\cdot \frac{1}{\varrho}\cdot \epsilon_{1}(l))$. In the situation of \emph{Type-1}, $\mathcal{B}$ can break the CDH assumption with the advantage $ (\frac{1}{2}\cdot\frac{1}{q}\cdot\frac{1}{q-1}\cdot \frac{1}{\varrho}\cdot\epsilon_{1}(l))$. In the situation of \emph{Type-2}, $\mathcal{B}$ can break the DL assumption with the advantage $\frac{1}{2}\cdot\epsilon_{2}(l)$. Hence, $\epsilon(l)=max\big\{( \frac{1}{2}\cdot\frac{1}{q}\cdot\frac{1}{q-1}\cdot \frac{1}{\varrho}\cdot \epsilon_{1}(l)),(\frac{1}{2}\cdot\epsilon_{2}(l))\big\}$.

\textit{\textit{}}\section{Experiment and Evaluation}
In this section, we introduce the implementation and evaluation of our PPLIST scheme.
\subsection{Runtime Environment}
The performance of our PPLIST scheme is measured on a Lenovo Legion Y7000P 2018 laptop with an Intel Core i7-8750H CPU, 500GB SSD and 8GB RAM. The scheme is implemented in Microsoft Windows 10 System using E-clipse Integrated environment, Java language and JPBC library \cite{5983948}.

In our implementation, we apply the Type F curve. For the hash functions $H_{1}:\big\{0,1\big\}^{*}\rightarrow G_{1}$, $H_{2}:\big\{0,1\big\}^{*}\rightarrow Z_{q}$ and $H_{3}:\big\{0,1\big\}^{*}\rightarrow Z_{q}$ required by our scheme, we used $SHA-256$ and the “newElementfromHash()" method in the JPBC library.

Our scheme is implemented in the following three cases: 1) $n=20, d=10$; 2) $n=100, d=50$; 3) $n=200, d=100$. The experimental results are shown in Table~\ref{tab3}.
\begin{table}\centering
	\caption{Times(ms)}\label{tab3}
	\begin{tabular}{|c|c|c|c|}
		\hline
		Phase & n=20,d=10 & n=100,d=50 & n=200,d=100\\
		\hline
		Setup & 522 & 519 & 506 \\
		Station-Key-Generation & 137 & 578 & 1031 \\
		User-Key-Generation & 7 & 5 & 4 \\
		Trace-Key-Generation & 8 & 5 & 3 \\
		User-Pseudonym & 27 & 12 & 12 \\
		Public-Key-Aggregation & 217 & 745 & 1400 \\
		Sign & 122 & 486 & 953 \\
		User-Ownership-Verify & 112 & 106 & 97 \\
		Verify & 176 & 144 & 141 \\
		Trace & 186 & 144 & 141 \\
		\hline
	\end{tabular}
\end{table}
\subsection{Timing}
The setup phase is a process run by the system. It takes 522ms, 519ms and 506ms to setup the system in case 1, case 2 and case 3, respectively. According to the data, it can be observed that the running time of the three cases is roughly the same in the setup phase.

The key pair generation phase is run by logistics stations, user and trace party. It takes 137ms, 578ms and 1031ms to generate the key pair of logistics stations in case 1, case 2 and case 3, respectively. In the user key pair generation phase, it takes 7ms, 5ms and 4ms in case 1, case 2 and case 3, respectively. For trace party to generate key pair, it takes 8ms, 5ms and 3ms in case 1, case 2 and case 3, respectively. 

The pseudonym generation phase is run by the user. It takes 27 ms, 12ms and 12 ms in case 1, case 2 and case 3, respectively. Observing the experimental data, it is not difficult to find that the running time of the public key aggregation phase is proportional in the number of logistics stations. It takes 259ms, 745ms and 1400ms to aggregate public keys in the three cases, respectively. The signature phase is run by logistics stations. The times to generate a multi-signature in case 1, case 2 and case 3 are 122ms, 486ms and 953ms, respectively.  

In the user ownership verification phase, a user proves the ownership by interacting with the last logistics station. It takes 112ms, 106ms and 97ms in case 1, case 2 and case 3, respectively. In signature validation phase, it takes 186ms, 144ms and 141ms to verify a multi-signature in case 1, case 2 and case 3, respectively. To trace a user, it takes 176ms, 144ms and 141ms in case 1, case 2 and case 3, respectively. We implement our scheme in three different cases. The experiment results show the efficiency of our scheme. 
\section{Conclusions}
In this paper, to protect users' privacy in LIS, a PPLIST was proposed. In our scheme, users anonymously use logistics services. Furthermore, a trace party can de-anonymized users to prevent illegal logistics. Additionally, the whole logistics process can be recorded and is unforgeable. We formalize the definition and security model of our scheme, and present a formal construction. We formally proved the security of our scheme and implemented it.

In our scheme, a buyer can prove the ownership of a product by proving the knowledge included in the pseudonyms. Our future work is to improve the flexibility of this work to enable an owner to designate a proxy to prove the ownership of products on behalf of him.

\section*{Acknowledgment}
This work was partially supported by the National Natural Science Foundation of China (Grant No. 61972190, 62072104, 61972095) and the National key research and development program of China (Grant No. 2020YFE0200600). This work was also partially supported by the Postgraduate Research \& Practice Innovation Program of Jiangsu Province (Grand No. KYCX20\_1322) and the Natural Science Foundation of the Fujian Province, China (Grant No. 2020J01159).
\bibliographystyle{splncs04}
\bibliography{a}

\begin{thebibliography}{10}
\providecommand{\url}[1]{\texttt{#1}}
\providecommand{\urlprefix}{URL }
\providecommand{\doi}[1]{https://doi.org/#1}

\bibitem{Amazon}
Amazon. https://www.amazon.cn

\bibitem{Taobao}
Taobao. https://www.taobao.com

\bibitem{5983948}
Angelo, D.C., Vincenzo, I.: {JPBC}:{Java} pairing based cryptography. In: ISCC
  2011. pp. 850--855. IEEE (2011)

\bibitem{8187818}
Asaar, M.R., Salmasizadeh, M., Susil, W.: An identity-based multi-proxy
  multi-signature scheme without bilinear pairings and its variants. The
  Computer Journal  \textbf{58}(4),  1021--1039 (2015)

\bibitem{2008MultisignaturesS}
Bagherzandi, A., e, J.H.C., Jareck, S.: Multisignatures secure under the
  discrete logarithm assumption and a generalized forking lemma. In: CCS 2008.
  pp. 449--458. ACM (2008)

\bibitem{bardi1994logistics}
Bardi, E.J., Raghunathan, T.S., Bagchi, P.K.: Logistics information systems:
  {The} strategic role of top management. Journal of Business Logistics
  \textbf{15}(1),  71--85 (1994)

\bibitem{2006Multi}
Bellare, M., Neven, G.: Multi-signatures in the plain public-key model and a
  general forking lemma. In: CCS 2006. pp. 390--399. ACM (2006)

\bibitem{inproceedingsBA}
Boldyreva, A.: Threshold signatures, multisignatures and blind signatures based
  on the gap-{Diffie-Hellman}-group signature scheme. In: PKC 2003. LNCS,
  vol.~2567, pp. 31--46. Springer (2003)

\bibitem{2018Compact}
Boneh, D., Drijvers, M., Neven, G.: Compact multi-signatures for smaller
  blockchains. In: ASIACRYPT 2018. LNCS, vol. 11273, pp. 435--464. Springer
  (2018)

\bibitem{1985DC}
Chaum, D.: Security without identification: {Transaction} systems to make big
  brother obsolete. Communications of the ACM  \textbf{28}(10),  1030--1044
  (1985)

\bibitem{inproceedingsC}
Chaum, D., hendrik Evertse, J.: A secure and privacy-protecting protocol for
  transmitting personal information between organizations. In: CRYPTO 1986.
  LNCS, vol.~263, pp. 118--167. Springer (1986)

\bibitem{inproceedingsCL}
Chen, L.: Access with pseudonyms. In: CPA 1995. LNCS, vol.~1029, pp. 232--243.
  Springer (1995)

\bibitem{MC2019}
Chrisos, M.: Pseudonymization Techniques: How to Protect Your Data.
  https://www.techfunnel.com/information-technology/pseudonymization-techniques-how-to-protect-your-data/
  (2019)

\bibitem{christin2010dissecting}
Christin, N., Yanagihara, S.S., Kamataki, K.: Dissecting one click frauds. In:
  CCS 2010. pp. 15--26. ACM (2010)

\bibitem{closs2000logistics}
Closs, D.J., Xu, K.: Logistics information technology practice in manufacturing
  and merchandising firms – {An} international benchmarking study versus
  world class logistics firms. International Journal of Physical Distribution
  \& Logistics Management  \textbf{30}(10),  869--886 (2000)

\bibitem{7583659}
Correll, N., Bekris, K.E., Berenson, D., Brock, O., Causo, A., Hauser, K.,
  Okada, K., Rodriguez, A., Romano, J.M., Wurman, P.R.: Analysis and
  observations from the first {Amazon} picking challenge. IEEE Transactions on
  Automation Science and Engineering  \textbf{15}(1),  172--188 (2018)

\bibitem{2018LIP}
Gao, Q., Zhang, J., Ma, J., Yang, C., Guo, J., Miao, Y.: {LIP-PA}: {A}
  logistics information privacy protection scheme with position and
  attribute-based access control on mobile devices. Wireless Communications and
  Mobile Computing  \textbf{2018}(1),  1--14 (2018)

\bibitem{Gordon1993Discrete}
Gordon, D.M.: Discrete logarithms in {GF(P)} using the number field sieve. SIAM
  Journal on Discrete Mathematics  \textbf{6}(1),  124--138 (1993)

\bibitem{grinshpoun2013asymmetric}
Grinshpoun, T., Grubshtein, A., Zivan, R., Netzer, A., Meisels, A.: Asymmetric
  distributed constraint optimization problems. Journal of Artificial
  Intelligence Research  \textbf{47}(1),  613--647 (2013)

\bibitem{2020Anonymous}
Han, J., Chen, L., Schneider, S., Treharne, H., Wesemeyer, S., Wilson, N.:
  Anonymous single sign-on with proxy re-verification. IEEE Transactions on
  Information Forensics and Security  \textbf{15}(1),  223--236 (2020)

\bibitem{2008Efficient}
Hanaoka, G., Kurosawa, K.: Efficient chosen ciphertext secure public key
  encryption under the computational {Diffie-Hellman} assumption. In: ASIACRYPT
  2008. LNCS, vol.~5350, pp. 308--325. Springer (2008)

\bibitem{article27}
Horster, P., Michels, M., Petersen, H.: Meta-Multisignature schemes based on
  the discrete logarithm problem, pp. 128--142. IFIP, Springer (1995)

\bibitem{1983A}
Itakura, K., Nakamura, K.: A public-key cryptosystem suitable for digital
  multisignatures. NEC Research \& Development  \textbf{71}(1), ~1--8 (1983)

\bibitem{8103347}
Kang, J., Yu, R., Huang, X., Zhang, Y.: Privacy-preserved pseudonym scheme for
  fog computing supported internet of vehicles. IEEE Transactions on
  Intelligent Transportation Systems  \textbf{19}(8),  2627--2637 (2018)

\bibitem{kim2005a}
Kim, D.J., Song, Y.I., Braynov, S.B., Rao, H.R.: A multidimensional trust
  formation model in {B-to-C} e-commerce: {A} conceptual framework and content
  analyses of academia/practitioner perspectives. Decision Support Systems
  \textbf{40}(2),  143--165 (2005)

\bibitem{8622160}
Korth, B., Schwede, C., Zajac, M.: Simulation-ready digital twin for realtime
  management of logistics systems. In: IEEE Big Data 2018. pp. 4194--4201. IEEE
  (2018)

\bibitem{lai2005information}
Lai, K., Ngai, E., Cheng, T.: Information technology adoption in {Hong Kong's}
  logistics industry. Transportation Journal  \textbf{44}(4), ~1--9 (2005)

\bibitem{7034998}
Liu, F., Shu, P., Lui, J.C.S.: App{ATP}:{An} energy conserving adaptive
  mobile-cloud transmission protocol. IEEE Transactions on Computers
  \textbf{64}(11),  3051--3063 (2015)

\bibitem{2016A}
Liu, S., Wang, J.: A security-enhanced express delivery system based on {NFC}.
  In: ICSICT 2016. pp. 1534--1536. IEEE (2016)

\bibitem{2003Sequential}
Lysyanskaya, A., Micali, S., Reyzin, L., Shacham, H.: Sequential aggregate
  signatures from trapdoor permutations. In: EUROCRYPT 2004. LNCS, vol.~3027,
  pp. 74--90. Springer (2004)

\bibitem{articleLA}
Lysyanskaya, A., Rivest, R.L., Sahai, A., Wolf, S.: Pseudonym systems. In: SAC
  1999. LNCS, vol.~1758, pp. 184--199. Springer (1999)

\bibitem{011coordinating}
Léauté, T., Faltings, B.: Coordinating logistics operations with privacy
  guarantees. In: IJCAI 2011. pp. 2482--2487. Morgan Kaufmann (2011)

\bibitem{lis}
Marko, A., Marjan, M., Vlada, S.: Logistics information system. Vojnotehnicki
  glasnik  \textbf{58}(1),  33--61 (2010)

\bibitem{article33}
Micali, S., Ohta, K., Reyzin, L.: Accountable-subgroup multisignatures:
  {Extended} abstract. In: CCS 2001. pp. 245--254. ACM (2001)

\bibitem{5207644}
Narayanan, A., Shmatikov, V.: De-anonymizing social networks. In: S \& P 2009.
  pp. 173--187. IEEE (2009)

\bibitem{5714258}
Neven, G.: Efficient sequential aggregate signed data. IEEE Transactions on
  Information Theory  \textbf{57}(3),  1803--1815 (2011)

\bibitem{ngai2008logistics}
Ngai, E., Lai, K.H., Cheng, T.: Logistics information systems: The {Hong Kong}
  experience. International Journal of Production Economics  \textbf{113}(1),
  223--234 (2008)

\bibitem{inproceedings35}
Ohta, K., Okamoto, T.: A digital multisignature scheme based on the fiat-shamir
  scheme. In: ASIACRYPT 1991. LNCS, vol.~739, pp. 139--148. Springer (1991)

\bibitem{article36}
Ohta, K., Okamoto, T.: Multisignature schemes secure against active insider
  attacks. IEICE Transactions on Fundamentals of Electronics, Communications
  and Computer Sciences  \textbf{E82-A}(1),  21--31 (1999)

\bibitem{article34}
Okamoto, T.: A digital multisignature scheme using bijective public-key
  cryptosystems. ACM Transactions on Computer Systems  \textbf{6}(4),  432--441
  (1988)

\bibitem{2016AN}
Qi, H., Chenjie, D., Yingbiao, Y., Lei, L.: A new express management system
  based on encrypted {QR} code. In: ICICTA 2015. pp. 53--56. IEEE (2015)

\bibitem{7524578}
Qian, J., Li, X., Zhang, C., Chen, L.: De-anonymizing social networks and
  inferring private attributes using knowledge graphs. In: INFOCOM 2016.
  pp.~1--9. IEEE (2016)

\bibitem{5496530}
Qu, Z., He, P., Hou, L.: Studies on internet real-name system and network
  action surveillance system. In: EDT 2010. pp. 469--472. IEEE (2010)

\bibitem{8965262}
Stallings, W.: Handling of personal information and deidentified, aggregated,
  and pseudonymized information under the {California} consumer privacy act.
  IEEE Security \& Privacy  \textbf{18}(1),  61--64 (2020)

\bibitem{tarjan2014a}
Tarjan, L., Senk, I., Tegeltija, S., Stankovski, S., Ostojic, G.: A readability
  analysis for {QR} code application in a traceability system. Computers \&
  Electronics in Agriculture  \textbf{109}(4),  1--11 (2014)

\bibitem{2013Efficient}
Tiwari, N., Padhye, S., He, D.: Efficient {ID}-based multiproxy multisignature
  without bilinear maps in {ROM}. Annals of Telecommunications
  \textbf{68}(3-4),  231--237 (2013)

\bibitem{2014E}
Wei, Q., Li, X.: Express information protection application based on
  k-anonymity. Application Research of Computers  \textbf{31}(2),  555--567
  (2014)

\bibitem{2014QW}
Wei, Q., Wang, C., Li, X.: Express information privacy protection application
  based on {RSA}. Application of Electronic Technique  \textbf{40}(7),  58--60
  (2014)

\bibitem{xu2011a}
Xu, F.J., Tan, C.J., Tong, F.C.: {Auto-ID} enabled tracking and tracing data
  sharing over dynamic {B2B} and {B2G} relationships. In: RFID-TA 2011. pp.
  394--401. IEEE (2011)

\bibitem{2016Research}
Ye, L., Wang, Y., Chen, J.: Research on the intelligent warehouse management
  system based on near field communication {(NFC)} technology. International
  Journal of Advanced Pervasive and Ubiquitous Computing  \textbf{8}(2),
  38--55 (2016)

\bibitem{5592943}
Yong, L., Jing, P., Ma, B., Bo, Y.: The analysis of logistic bottleneck of
  taobao.com in {C2C} model and its countermeasures. In: ICEE 2010. pp.
  5537--5539. IEEE (2010)

\end{thebibliography}
\end{document}